\newcommand{\proof}{{\it Proof: }}
\newcommand{\proofend}{$\Box$}
\newcommand{\ket}[1]{|#1\rangle}
\newcommand{\bra}[1]{\langle#1|}
\newcommand{\proj}[1]{\ket{#1}\!\bra{#1}}
\newcommand{\be}[0]{\begin{equation}}
\newcommand{\ee}[0]{\end{equation}}
\newcommand{\bea}[0]{\begin{eqnarray}}
\newcommand{\eea}[0]{\end{eqnarray}}
\newcommand{\bei}{\begin{itemize}}
\newcommand{\eei}{\end{itemize}}
\def\>{\rangle}
\def\<{\langle}
\newcommand{\pacc}{p_{\text{A}}} 
\newcommand{\psucc}{p_{\text{S}}} 
\newcommand{\pdist}{p^{\mu}_{\text{S}}} 
\newcommand{\protocol}{\Pi}
\newcommand{\idop}{\mathbb{I}}
\newcommand{\tr}[1]{\text{tr}\left[#1\right]}
\newcommand{\nsbox}{B}
\newtheorem{lemma}{Lemma}
\newtheorem{propo}{Proposition}
\newtheorem{definition}{Definition}
\newtheorem{obs}{Observation}
\newlength{\dinwidth}
\newlength{\dinmargin}
\DeclareMathAlphabet{\scr}{U}{rsfs}{m}{n}
\begin{document}
\title{Quantum communication complexity advantage implies violation of a Bell inequality}

\author{H. Buhrman}
\affiliation{Centrum Wiskunde \& Informatica (CWI), Amsterdam, The Netherlands}

\author{L. Czekaj}
\affiliation{Faculty of Mathematics, Physics and Informatics, Gda\'nsk University, 80-952 Gda\'nsk,Poland}

\author{A. Grudka}
\affiliation{Faculty of Physics, Adam Mickiewicz University,
  Umultowska 85, 61-614 Pozna\'{n}, Poland}

\author{M. Horodecki}
\affiliation{Faculty of Mathematics, Physics and Informatics, Gda\'nsk University, 80-952 Gda\'nsk,Poland}

\author{P. Horodecki}
\affiliation{Faculty of Applied Physics and Mathematics, Gda{\'n}sk
  University of Technology, 80-952 Gda{\'n}sk, Poland}

\author{M. Markiewicz}
\affiliation{Faculty of Mathematics, Physics and Informatics, Gda\'nsk University, 80-952 Gda\'nsk,Poland}

\author{F. Speelman}
\affiliation{Centrum Wiskunde \& Informatica (CWI), Amsterdam, The Netherlands}

\author{S. Strelchuk}

\affiliation{Department of Applied Mathematics and Theoretical Physics, University of Cambridge}

\begin{abstract}
 We obtain a general connection between a quantum advantage in communication complexity and non-locality. We show that given any protocol offering a (sufficiently large) quantum advantage in communication complexity, there exists a way of obtaining measurement statistics which violate some Bell inequality. Our main tool is port-based teleportation. If the gap between quantum and classical communication complexity can grow arbitrarily large, the ratio of the quantum value to the classical value of the Bell quantity becomes unbounded with the increase in the number of inputs and outputs.
\end{abstract}

\maketitle
\section{Introduction}

The key element which distinguishes classical from quantum information theory is quantum correlations. 
The first attempt to quantify their strength 
was quantitatively expressed in Bell's 
theorem~\cite{bell_problem_1966}. 
They are similar to classical correlations in that one cannot take advantage of them to perform superluminal communication, yet, every attempt to explain such correlations from the point of view of classical theory -- namely, to find a local hidden variable model -- is impossible. For a long time the existence of quantum correlations was merely of interest to philosophically minded physicists, and was considered an exotic peculiarity, rather than a useful resource for practical problems in physics or computer science. This has changed dramatically in recent years -- it became apparent that quantum correlations can be used as a resource for a number of distributed information processing 
tasks~\cite{yao_class_model,yao_quant_model,burm_cleve} producing surprising results~\cite{Raz99,Buhrman_BIG}.

One area where using quantum correlations has wide-reaching practical implications is communication complexity.
A typical instance of a communication complexity problem features two parties, Alice and Bob, who are given binary inputs $x$ and $y$.
They wish to compute the value of 
$f(x,y)$
by exchanging messages between each other. The minimum amount of communication required to accomplish the task by exchanging classical bits (with bounded probability of success) is called classical communication complexity, denoted as $C(f)$.

There are two ways to account for the communication complexity of computing a function when we want to make use of quantum correlations. In the first one, Alice and Bob share any number of instances of the maximally entangled state $|\Psi^-\>_{AB} = \frac{1}{\sqrt{2}}(|01\> - |10\>)_{AB}$ beforehand and are allowed to exchange classical bits in order to solve the problem. Another approach is to have no pre-shared entanglement, but instead allow Alice and Bob to exchange qubits. The latter type of protocol can always be converted to the former with pre-shared entanglement and classical communication. We denote the quantum communication complexity of computing the 
function $f(x,y)$ (with bounded probability of success) by $Q(f)$.

For a large number of problems, the respective quantum communication complexity is much lower compared to its classical counterpart~\cite{burm_cleve,quantum_entanglement_complexity_reduction}. 
In such cases, we say that there exists a {\it quantum advantage for communication complexity}. In other words, one achieves a quantum advantage if the quantum communication complexity of the function is lower than its corresponding classical communication complexity.

One of the most striking example of quantum advantage is the famous Raz problem~\cite{Raz99,RegevKartlag} where quantum communication complexity is exponentially smaller than classical. Another example is the ``hidden matching" problem for which the quantum advantage leads to one of the strongest possible violations of the Bell inequality~\cite{nonloc_hidden_matching}. The latter inequality plays an important role in detecting quantum correlations and certifying the genuinely quantum nature of resources at hand. Previously, to obtain an unbounded violation of a particular Bell inequality one resorted to problems with the exponential quantum advantage. 
[Here, we show that one can achieve the same result using only polynomial quantum advantage.]

As a matter of fact, the very first protocols offering quantum advantage were based on a quantum violation of certain Bell inequalities 
\cite{Buhrman_BIG}. It was even shown that for a very large class of multiparty Bell inequalities, correlations which violate them lead 
to a quantum advantage (perhaps, for a peculiar function) \cite{Brukner}. 
This indicates that non-locality often leads to a quantum advantage. 
However, there are more and more communication protocols which offer a quantum advantage, but, nevertheless, they are not known to violate any Bell inequality.

It has long been suspected~\cite{Buhrman_BIG} that quantum communication complexity and non-locality are the two sides of the same coin.
 While it is possible to convert an non-locality experiment to the communication complexity instance, the reverse has been known only for some particular examples. The question is whether this relationship holds in general, namely:

{\bf Q: }{\it Is quantum communication inherently equivalent to non-locality when solving communication complexity problems?}


Until now, there were only two concrete examples where one could certify quantum correlations in the context of communication complexity by providing a quantum state and a set of measurements whose statistics violate some Bell inequality. The first case is the ``hidden matching" problem and the second one is a theorem, which states that a special subset of protocols that provide quantum advantage also imply the violation of local realism~\cite{Buhrman_BIG}. To get the violation of Bell inequalities obtained from the examples above, one had to perform an involved analysis which relied on a problem-specific set of symmetries. Thus, such an approach cannot be generalized to an arbitrary protocol for achieving a quantum advantage in the communication complexity problem. 

In this paper, we show that given {\it any} (sufficiently large) quantum advantage
in communication complexity, there exists a way of obtaining measurement statistics which violate some {\it linear} Bell inequality. This completely resolves the question about the equivalence between the quantum communication and non-locality: whenever a protocol computes the value of the function $f(x,y)$ better than the best classical protocol, even with a gap that is only polynomial, then there must exist a Bell inequality which is violated. 


We provide a universal method which takes a protocol which achieves the quantum advantage in any single- or multi-round communication complexity problem and uses it to derive the violation of some linear Bell inequality. This method can be generalized to a setting with more than two parties. 
Our Bell inequalities lead to a so-called unbounded violation (see~\cite{junge_2_part_extr}): the ratio of the quantum value to the classical value of the Bell quantity can grow arbitrarily large with the increase of the number of inputs and outputs, whenever $(Q(f))^4<C(f)$. In particular, an exponential advantage leads to the exponential ratio. 

Our method consists of two parts. In the first part, given a protocol which computes a function $f$ by using $Q(f)$ qubits, 
and the optimal classical error probability achievable with $(Q(f))^4$ bits, we construct the corresponding linear Bell inequality. In the second part, we use the quantum protocol to construct a set of quantum measurements on a maximally entangled state which leads to the violation of the Bell inequality above. The central ingredient of our construction is the recently-discovered port-based 
teleportation~\cite{portbasedtelep1,portbasedtelep2}. 

For one-way communication complexity problems we develop a much simpler method which is based on the remote state preparation and results in a non-linear Bell inequality.



%

\section{Main Part}


\subsection{Quantum communication complexity protocol}

We start by defining a general quantum multi-round communication protocol. Two parties, Alice and Bob receive inputs $x\in X=\{0,1\}^n$ and $y\in Y=\{0,1\}^n$ according to some distribution $\mu$ and their goal is to compute the function $f: X\times Y\to \{0,1\}$  by exchanging qubits over multiple rounds. We will further use subscripts for the system names to denote the round number. The parties proceed as follows.
\begin{enumerate}

\item Alice, applies $U_x^{A_0\to M_1A_1}$ on her local state $\rho_{A_0}$ and sends $\rho_{M_1}$ to Bob. In general, $M_1$ may be entangled with $A_1$, which remains with Alice.

\item Bob performs $U_y^{M_1 B_0\to M_2 B_1}$ on the state $\rho_{M_1}\otimes\sigma_{B_0}$. Then he sends back the system ${M}_2$ to Alice, keeping $B_1$.

\item Parties repeat steps $1$ and $2$ for $r-1$ rounds. In the last round, instead of communicating back to Alice, Bob measures the observable $o_y$ and outputs the value of the function $f$. The observable $o_y$ acts on the system $M_{2r-1}$ and Bob's memory $B_{r-1}$.
\end{enumerate}

%
%
%


The above protocol may be transformed to the form where a one-qubit system is exchanged between Alice and Bob at any round. To achieve this, we split the $q$-qubit message from Alice to Bob (or vice versa) into $q$ rounds of one-qubit transmission and modify the protocol as follows. We start from the initial state which has the form:
\begin{equation}\label{initialstate}
\ket{\rho_A^{M}}\ket{\theta^C_A}\ket{\sigma_B^{M}},
\end{equation}
where $\ket{\rho_A^{M}}$ and $\ket{\sigma_B^{M}}$ describe the memory registers which belong to Alice and Bob respectively. The state $\ket{\theta^C_A}$, initially in state $\ket{\theta}=\ket{0}$ with Alice, is a one-qubit system which is used for message passing from Alice to Bob and vice-versa. In each round, Alice applies $U_{x}^i$ to $\rho\otimes\theta$, and Bob applies $U_{y}^i$ to $\sigma\otimes\theta$. In the last round, instead of applying a unitary transformation, Bob performs a measurement. One may view unitaries $U_{x}^i$ and $U_y^i$ as controlled gates acting on the memory with the one-qubit register acting as a control. This implies that for given $x$, in round $i$ the state of Alice memory is spanned on at most $2^i$ orthogonal vectors. This observation will be crucial for the construction of a memoryless quantum protocol. Thus, we can transform any given protocol which requires $Q$ qubits of communication into one which makes use of $2Q$ one-qubit exchanges.

\subsection{From the protocol with memory to the memoryless protocol}
 One shortcoming of the above protocols is that the parties are required to store the memory which may in general be entangled with the message and thus restrict the range of possible operations on either side. We get rid of this requirement by converting the above protocol with memory to the memoryless one. For the memoryless protocol, both parties compress their local memory and send it along with the messages.
%
%
%
%

The following lemma, which is a consequence of Yao's Compression Lemma~\cite{yao_quant_model,Kremer1995} shows that, rather surprisingly, sending memory alongside the message does not impact communication complexity by much.

\begin{propo}
For any $Q$-qubit quantum communication protocol (without prior entanglement) there exists a $Q^2 + 2 Q$-qubit quantum communication protocol such that Alice and Bob do not need any local quantum memory
that persists between the rounds.
\end{propo}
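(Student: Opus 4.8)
The plan is to turn the original protocol into a memoryless one by simulating it round by round: at every round the acting party does exactly what it would have done, but instead of keeping its updated memory it \emph{compresses} that memory and appends it to the outgoing message. The structural input for this comes from the controlled-gate observation above together with the assumption of no prior entanglement. Because the two parties start in a product state $\rho_{A_0}\otimes\sigma_{B_0}$, the Schmidt rank of either party's local state across the Alice/Bob cut is at most $2^{t_i}$ once $t_i:=q_1+\dots+q_i$ qubits have been exchanged through round $i$, where $q_j$ is the length of the $j$-th message; moreover we may assume that every round carries at least one qubit, so the number of rounds is $R\le Q$. Writing each local unitary in the controlled-gate normal form, the acting party's memory after round $i$ is --- \emph{for a fixed value of that party's own input} --- a superposition over the at most $2^{t_i}$ possible histories of the messages it has received, each history contributing a vector that depends on the input alone; consequently all these memory states, as the \emph{other} party's input ranges over its values, lie in one and the same subspace $\mathcal{S}_i$ of dimension at most $2^{t_i}$.

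Given this, the memoryless protocol runs as follows. Between rounds each party keeps only its classical input; a single register (``the blob'') travels back and forth. When it is Alice's turn in round $i$, the incoming blob consists of the last message (length $q_{i-1}$), Alice's own memory from round $i-2$ compressed to at most $t_{i-2}$ qubits, and Bob's memory from round $i-1$ compressed to at most $t_{i-1}$ qubits, the latter carried along as an \emph{opaque} register Alice never touches. Alice decompresses her own memory via the (input-dependent) isometry onto the subspace $\mathcal{S}_{i-2}$, applies the original $U_x^i$ to it together with the incoming message register, obtains the new message (length $q_i$) and her new memory, recompresses the latter into at most $t_i$ qubits using $\mathcal{S}_i$, and sends out the blob consisting of the new message, her recompressed memory, and Bob's still-opaque memory. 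Bob's rounds are symmetric, and in the last round the acting party performs the original measurement $o_y$ on its decompressed memory together with the incoming message and outputs $f$. A routine induction on the round number shows that the global state of this simulation coincides, register by register, with that of the original protocol, so every compression is lossless and the output distribution is reproduced exactly.

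It remains to count the communication. The blob sent in round $i$ carries at most $q_i+t_i+t_{i-1}$ qubits, so the total is at most $\sum_{i=1}^{R-1}\bigl(q_i+t_i+t_{i-1}\bigr)\le Q+2\sum_{i=1}^{R-1}t_i$. Since $t_i=\sum_{j\le i}q_j$, $\sum_j q_j=Q$ and $R\le Q$, a direct optimization --- the extreme case being $R=Q$ with all $q_j=1$ --- bounds the right-hand side by $Q^2+2Q$, the claimed figure.

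The step I expect to be the real obstacle is the claim that the compression isometry can be chosen \emph{independently of the other party's input}: otherwise Alice's isometry in round $i$ would have to depend on $y$, which she does not know. This is exactly what the controlled-gate normal form buys --- it forces the $2^{t_i}$-dimensional span $\mathcal{S}_i$ of Alice's round-$i$ memory to be the same for all $y$, since it is spanned by the input-only branch vectors indexed by the received-message histories. A secondary, purely bookkeeping, difficulty is the routing of the two opaque memory fragments, so that each party's memory makes its round trip in time to be reused and nothing is ever dropped; that every blob must carry three parts for this to work is precisely what produces the quadratic rather than linear overhead.
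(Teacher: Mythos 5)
Your proof is correct and follows essentially the same route the paper takes: the paper states this proposition as a consequence of Yao's/Kremer's compression lemma, and the controlled-gate observation it gives just before the proposition is exactly the $y$-independence of the span $\mathcal{S}_i$ that you correctly identify as the crux. Your blob-routing (carrying the actively-used memory plus the other party's opaque compressed memory) and the resulting quadratic count match the intended construction, and you even note correctly that the counting gives a slightly tighter figure than $Q^2+2Q$.
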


\subsection{Quantum measurements from the memoryless quantum communication complexity protocol} 

We now show how to convert a multi-round protocol for computing $f(x,y)$ which gives a quantum advantage to the violation of a linear Bell inequality. The method for converting one-way protocols was introduced in~\cite{previous_paper}. 
It relies on remote state preparation~\cite{remote_state_prep}, and therefore is not extendable to a multi-round protocol. Another downside of the latter method was that it produced a nonlinear Bell inequality, whereas our method gives rise to a linear Bell inequality. Our protocol is based on 
the recently introduced method of port-based teleportation which we briefly review below.

{\bf Port-based teleportation}. In deterministic port-based teleportation, the two parties share $N$ pairs of maximally entangled qudits $|\Psi^-\>_{A_1B_1}\otimes\dotsb\otimes |\Psi^-\>_{A_NB_N}$, each of which is called a 'port'. To transmit the state $|\Psi^{in}\>_{A_0}$, the sender performs the square-root teleportation measurement given by a set of POVM elements $\{\Pi\}_{i=1}^{N}$ 
(precisely defined in Eqn. (27) of~\cite{portbasedtelep2}) on all the systems $A_i$, $i=0,...,N$, obtaining the result $z=1\ldots N$. Then, he communicates $z$ to the receiver who traces out the subsystems $B_1...B_{z-1}B_{z+1}...B_N$ and remains with the teleported state $|\Psi^{out}\>_{B_z}$ in the subsystem $B_z$. Teleportation always succeeds and the fidelity of the teleported state with the original is $F(|\Psi^{in}\>_{A_0},|\Psi^{out}\>_{B_z})\geq 1 - \frac{d^2}{N}$. The cost of the classical communication from sender to receiver is equal to $c=\log_2 N$. The distinctive feature of this protocol is that unlike the original teleportation, it  does not require a correction on the receiver's side. 

{\bf Constructing quantum measurements}. 
Using port-based teleportation we can now construct the relevant quantum measurements. Parties start with the initial state~\eqref{initialstate} 
and perform the following protocol.
\begin{enumerate}
\item Alice applies $U_x^{A_0\to M_1A_1}$ on her local state $\rho_{A_0}$. She obtains the state of size $Q_1=\log\text{dim}M_1 +\log\text{dim}A_1$  which is teleported to Bob at once using $N_1$ ports each of dimension $2^{Q_1}$. This consumes $N_1$ ports. Alice does not communicate the classical teleportation outcomes $\{i^A_1\}$, $|\{i^A_1\}| = N_1$ with $i^A_1\in\{1,\ldots,N_1\}$ to Bob.

\item Bob applies the local unitary $U_y^{M_1 B_0\to  M_2 B_1}$ to each of the ports (he does not know the value of $i_1$) and teleports each of the $N_1$ states one-by-one by applying the teleportation measurement using $N_2$ ports each of the dimension $2^{Q_2}$ where  $Q_2 = \log\text{dim} M_2 + \log\text{dim}B_1 + \log\text{dim}A_1$. This consumes $N_1 N_2$ ports. Bob keeps the set of $ N_2$ teleportation outcomes $\{i^B_{1,1},\ldots,i^B_{1, N_2}\}$, $|\{i^B_{1,1},\ldots,i^B_{1, N_2}\}|=N_1N_2$ where for each $j=1\ldots N_2$, $i^B_{1,j}\in \{1,\ldots,N_2\}$.
\item Parties repeat steps 1 and 2 for $r-1$ rounds.
\end{enumerate}

At the end of the protocol we obtain the set of measurements which map the generic communication protocol into the set of correlations:
\begin{align}\label{corr_from_proto}
 p(&\{i_1^A\},\{i^B_{1,1},\ldots,i^B_{1,N_1}\},\{i^A_{2,1},\ldots,i^A_{2,N_1 {N}_2}\},\ldots,\nonumber \\&\{ i^B_{r,1},\ldots,i^B_{r,N_1N_2\ldots N_{2r-1}}\},\{ o_1,\ldots,o_{N_1N_2\ldots N_r}\}|x,y ),
\end{align}
where $\{o_j\}$ are the final teleportation measurements in round $r$ on Bob's side. A single round of the protocol is depicted in Fig.~\ref{figure:single-round-pbt} and the entire protocol is depicted in Fig.~\ref{figure:port-based-protocol}. 

\begin{figure}[!tbp]
  \centering
  \subfloat[Single round of the protocol.]{\includegraphics[width=0.44\textwidth]{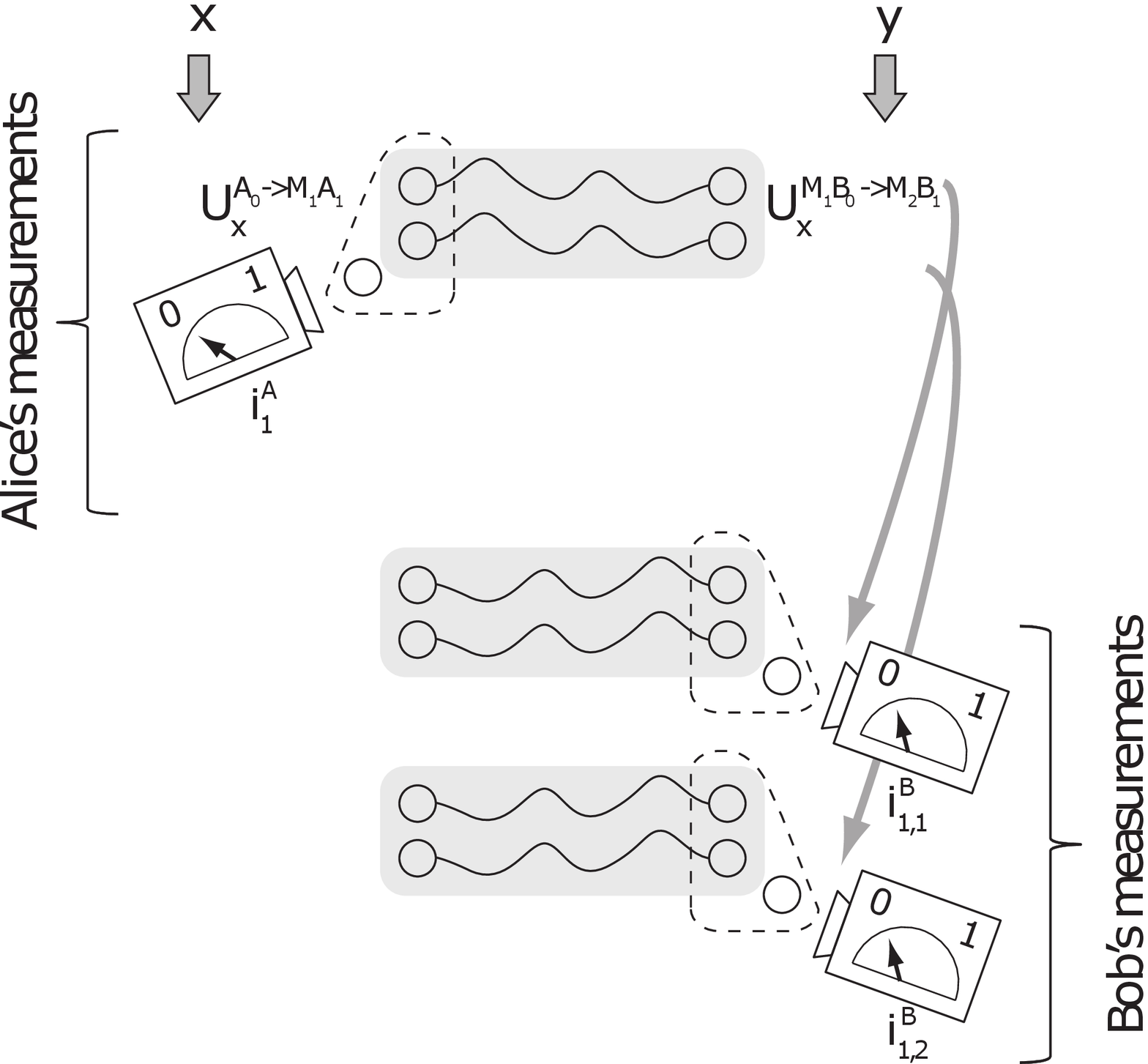}\label{figure:single-round-pbt}}
  \hfill
  \subfloat[Constructing quantum measurements. $A$ denotes Alice's local subsystem]{\includegraphics[width=0.38\textwidth]{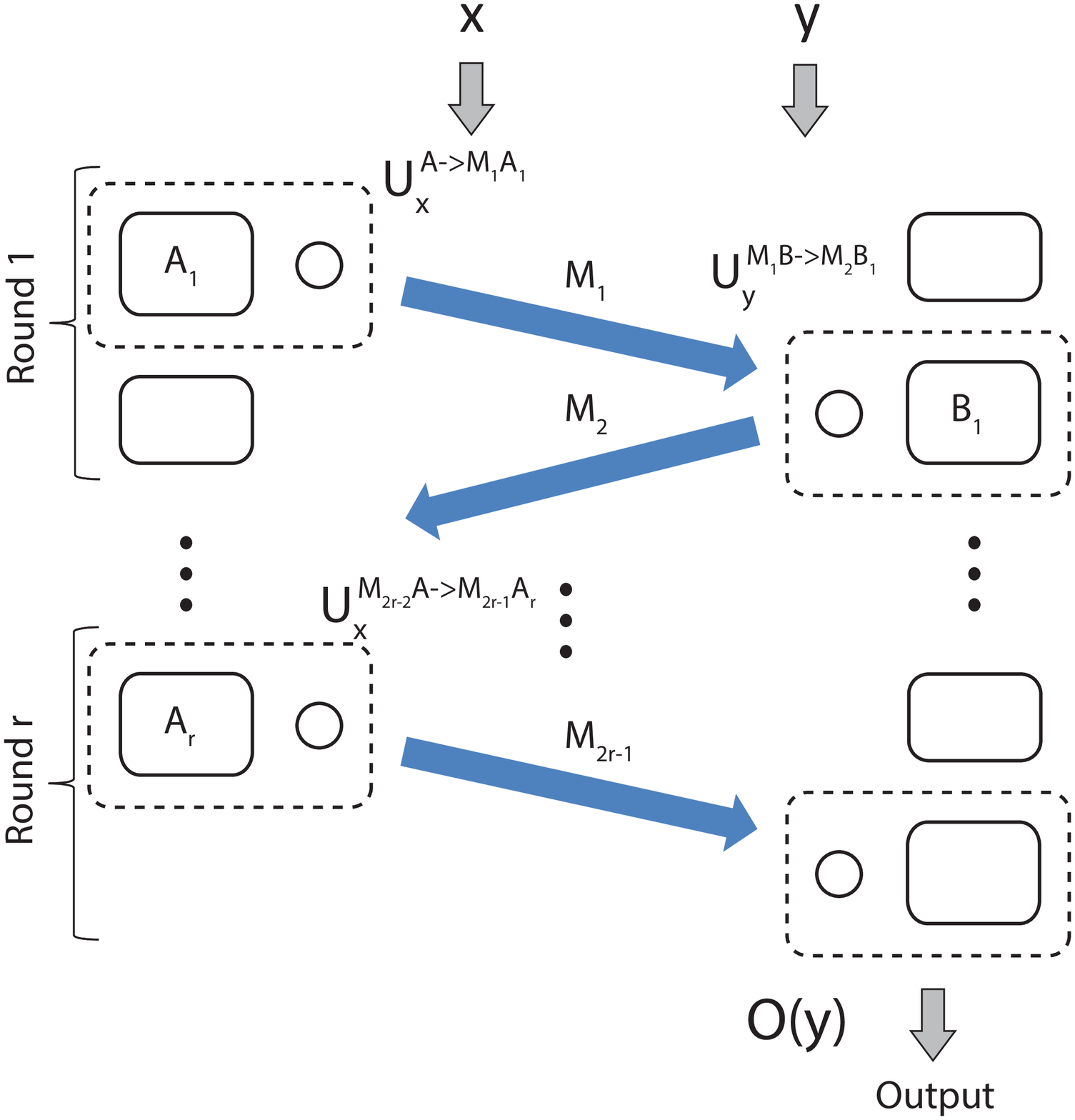}\label{figure:port-based-protocol}}
  \caption{The structure of the protocol.}
\end{figure}

%
%

{\bf Simulating the memoryless quantum protocol}. 
The last part of the puzzle is a method of simulating the memoryless quantum protocol using the above correlations and classical communication.
\begin{lemma}
Given the memoryless protocol for computing $f$ which uses $Q$ qubits of communication and achieves the success probability $p_{succ} \geq 1/2 + \epsilon$, $\epsilon >0$, one can simulate it using correlations~\eqref{corr_from_proto} and $O(Q^2)$ bits of classical communication with the success probability $p_{succ} \geq 1/2 + (1-2^{- Q})^{2Q} \epsilon$.
\end{lemma}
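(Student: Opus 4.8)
The plan is to show that the port-based correlations of \eqref{corr_from_proto}, supplemented with $O(Q^2)$ bits of classical communication, reproduce the memoryless $Q$-qubit protocol up to a controlled loss in fidelity. The key point is that a run of the port-based protocol is essentially the same as a run of the memoryless protocol, \emph{except} that each one-qubit (or few-qubit) message has been replaced by a port-based teleported copy, and that nobody has broadcast the teleportation outcomes. First I would recall from the construction that, after transforming to one-qubit exchanges, the state of a party's memory at round $i$ lives in a subspace of dimension at most $2^i$; hence at round $i$ the register that has to be teleported has dimension $d_i \le 2^{Q}$, and choosing $N_i$ ports of dimension $d_i$ with $N_i$ large makes the fidelity of that single teleportation at least $1 - d_i^2/N_i \ge 1 - 2^{-Q}$ (taking $N_i = 2^{3Q}$, say, which costs $\log_2 N_i = 3Q$ classical bits to identify the correct port once the outcome is known — and there are $O(Q)$ rounds, so $O(Q^2)$ bits total).

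Next I would run the following simulation of the memoryless protocol. The parties execute the port-based protocol of the previous subsection, so that after all $r$ rounds Bob holds, in his ``ports'', a family of candidate post-measurement states indexed by the unknown teleportation outcomes $\{i^A_1\},\{i^B_{1,\cdot}\},\dots$. Now the parties classically communicate all these outcomes (this is where the $O(Q^2)$ bits are spent: $\log_2 N_i$ bits per teleportation, $O(Q)$ teleportations). Knowing the outcomes, Bob can locate, at every stage, the single correct port and discard the others, so that the run collapses to exactly one branch — and that branch is, up to teleportation error, the true memoryless run on inputs $x,y$. Bob outputs the value of $o_y$ on that branch. Because the teleportation in each round is not exact, I would track the fidelity multiplicatively: with $r' = 2Q$ one-qubit-message rounds, each carrying fidelity $\ge 1 - 2^{-Q}$, the overall state fidelity with the ideal memoryless run is at least $(1-2^{-Q})^{2Q}$ by the standard chaining/telescoping argument for fidelity under sequential channels (using $F(\mathcal{E}(\rho),\mathcal{E}(\sigma)) \ge F(\rho,\sigma)$ monotonicity together with the per-step bound). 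Translating state fidelity into a bound on the change of the expectation value of the $\pm1$ observable $o_y$, the success probability degrades from $1/2 + \epsilon$ to at least $1/2 + (1-2^{-Q})^{2Q}\epsilon$, which is the claimed bound.

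The main obstacle, and the step that needs the most care, is the bookkeeping of dimensions and fidelities across rounds: one must be sure that the register teleported in round $i$ really does fit in dimension $2^{O(Q)}$ (this is exactly the ``state spanned on at most $2^i$ orthogonal vectors'' observation, which is why the one-qubit-exchange reformulation is invoked), and that the per-round fidelity $1 - d_i^2/N_i$ composes to $(1-2^{-Q})^{2Q}$ rather than something worse — in particular that the ``teleport each of the $N_1$ branches one-by-one'' step in round $2$ does not compound the error super-linearly, since only the one correct branch ultimately survives once the outcomes are revealed. A secondary subtlety is that Bob applies his local unitary $U_y$ to all ports before knowing $i_1$; one has to note that this is harmless because $U_y$ is the same on every port, so it commutes with the later selection of the correct port. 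Once these points are pinned down, the final inequality is immediate from monotonicity of fidelity and the standard relation between fidelity and the bias of a dichotomic measurement.
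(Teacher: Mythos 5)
Your proposal is correct and follows essentially the same route as the paper's proof: the parties classically exchange all port-based teleportation outcomes (at most $\log_2 N_i = O(Q)$ bits per round over $O(Q)$ rounds, hence $O(Q^2)$ bits), Bob keeps only the branch singled out by the received indices and outputs the corresponding $o$, and the per-round fidelity $1-d^2/N_i \geq 1-2^{-Q}$ is chained multiplicatively over at most $2Q$ rounds to give the bias $(1-2^{-Q})^{2Q}\epsilon$. The only differences are cosmetic (you take $\log_2 N_i = 3Q$ where the paper takes $4Q$, and you spell out the branch-selection and fidelity-to-bias step a bit more explicitly).
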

\proof 
having access to correlations~\eqref{corr_from_proto}, Alice and Bob exchange their respective outcomes of the teleportation measurements which amount to $\log_2 N_1{N}_2N_3\ldots N_{2r-1}$ bits of communication. This finalizes the port-based teleportation and thus simulates the corresponding memoryless protocol. After exchange, Bob returns $o_L$ where $L$ denotes the last index which he received from Alice. 

The above protocol is equivalent to $2r$ rounds of port-based teleportation employed for the memoryless protocol. Setting $\log_2 N_i = 4 Q$ makes the fidelity of teleportation on each step to be $F\geq (1-2^{- Q})$ and the corresponding success probability $p_{succ} \geq 1/2 + F^{2r} \epsilon$ where $p_{succ} \geq 1/2 + \epsilon$ is the success probability of the original quantum protocol. Bounding the number of rounds $r$ by the total amount of quantum communication $Q$, we get $p_{succ} \geq 1/2 + (1-2^{- Q})^{2Q} \epsilon$. Thus, the total amount of classical communication is bounded above by $2Q^2$.
\proofend

\subsection{Construction of a Bell inequality and its violation}
Let us sum up the whole construction. Firstly, we start with quantum multi-round protocol to compute $f$ which uses quantum communication and no shared entanglement. This protocol requires $Q$ qubits of communication and achieves $p_{succ} \geq 1/2 + \epsilon$. In this protocol, Alice and Bob may use local quantum memory between rounds. Second, we construct the protocol without local quantum memory which increases the cost of communication to $O(Q^2)$.
The memoryless protocol is then used to obtain correlations in the form~\eqref{corr_from_proto}. These correlations together with classical communication are used to recover the original communication complexity protocol which computes $f$. This protocol uses $O(Q^4)$ bits of classical communication and achieves success probability $p_{succ} \geq 1/2 + (1-2^{- Q})^{2Q} \epsilon$. 

Now, if for a function $f(x,y)$ there exists a gap between $C(f)>(Q(f))^4$ with $p_{succ}=1/2+\delta$ for the classical communication complexity protocol, and $\delta\ll\epsilon$ -- then we observe the quantum violation of the Bell inequality of the form:

\begin{eqnarray}
\sum_{x,y}\mu(x,y)\sum_{q\in \cal P}p\left( o_{q} = f(x,y)|x,y\right) \leq 1/2+\delta,
\label{eq:mpbell}
\end{eqnarray}
where 
$\mu$ is a probability measure on $X\times Y$, the set $\cal P$ denotes the set of all paths from the root to the leaves of length $2r-1$
of the tree formed by the subsequent outputs of Alice and Bob in the protocol and $p\left( o_{q} = f(x,y)|x,y\right)$ is the marginal probability which comes from summing over all  indices which do not explicitly appear in the path $q$ (cf. Fig~\ref{figure:typicaloutputs}). With the exception of the last level, every node on the $i$-th level 
has $N_i$ children which correspond to the outcome of the $i$-th round of teleportation. The leaves of the tree correspond to the outcomes of Bob's binary observable, which is his guess of the value of the function $f(x,y)$. 
(Note that in the Bell inequality, there appear only special outputs -- those given by the paths of length $2r-1$ from the root to the leaves -- 
while in general, outputs will be given by all sequences composed by choosing one node from every level.) 
\begin{figure}
\includegraphics[width=0.38\textwidth]{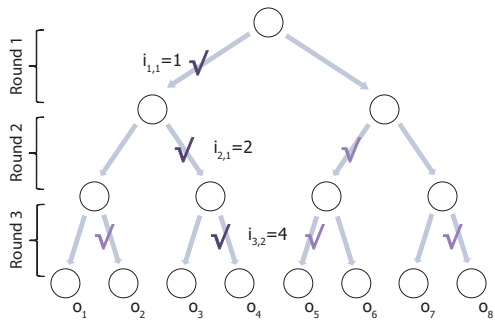}
  \caption{Selecting outputs for the Bell inequality in a $3$-round protocol: after Alice's teleportation measurement in the first round the state ended up in port $1$. Then, Bob teleports each of the two ports from the array that he used in the previous round, obtaining the outcomes $2$ and $3$ for ports $1$ and $2$ respectively. Lastly, Alice performs a teleportation measurement for each of her four ports, obtaining the outcomes $2,4,5,8$ for the ports $1,2,3,4$ respectively. A define a path $q$ to be a sequence of teleportation outcomes: $q = \{i_{1,1}=1, i_{2,1}=2, i_{3,2}=4\}$.}\label{figure:typicaloutputs}
\end{figure}
\subsection{Large violation of a Bell inequality from communication complexity}
Our results immediately imply that whenever $C(f)>(Q(f))^4$, we obtain an unbounded violation of the Bell inequality -- the ratio of the quantum to classical value of our Bell inequality
grows arbitrarily when we increase the number of inputs and outputs 
\cite{junge_2_part_extr,junge_2_part_nonmax,Buhrman_BIG,nonloc_hidden_matching,KVGame,Palazuelos_bounds}. We now introduce several definitions which enable us to contrast the performance of the quantum and classical protocols.
 
 \begin{definition}
For the arbitrary protocol $P$ computing the function $f(x,y)$ exchanging $C_{P}$ messages, denote
\be
\tilde B_P=\sum_{x,y}\mu(x,y) p(h_P(a,b)=f(x,y)|xy)
\label{eq:Bell}
\ee
 to be the Bell value achievable by some protocol $P$. The 'shifted' Bell value achievable by the protocol $P$ is $B_P = \tilde B_P-\frac{1}{2}$.
\end{definition}

The relation of the shifted Bell value with the success probability is straightforward: if a protocol $P$ computes the function with the success probability $q\ge \frac{1}{2}$, then this is equivalent to saying that it achieves the Bell value $B_P = q-\frac{1}{2}$.

We shall need the following lemma which provides the expression for the quantum to classical Bell inequality violation ratio.
\begin{lemma}\label{lemma:bellvaluelemma}
May some quantum correlations $P_q$ allow to compute the value of the function $f$ with probability of success $\frac{2}{3}$ after exchanging $C_{P_q}$ bits. Denote 
$C\left(f,\frac23\right)$ to be the number of bits required to compute $f$ using classical resources and with success probability 
$\frac23$. Then, the ratio of quantum to classical values of the Bell inequality has the form
\be\label{eqn:ratio}
\frac{B_{P_q}}{B_{P_c}} \geq c \sqrt{\frac{C\left(f,\frac{2}{3}\right)}{C_{P_q}}}
\ee
where $B_{P_q}$  and $B_{P_c}$ are the maximal quantum and classical shifted Bell values, respectively, and $c$ is an absolute constant.
\label{lemma-Bell1}
\end{lemma}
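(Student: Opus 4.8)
The plan is to bound the numerator $B_{P_q}$ and the denominator $B_{P_c}$ of the ratio separately: the numerator is immediate from the hypothesis, and the denominator is controlled by a minimax-plus-amplification argument. For the numerator, the quantum correlations $P_q$ together with $C_{P_q}$ bits of classical communication compute $f$ with success probability $\tfrac{2}{3}$, so by the remark preceding the lemma the corresponding shifted Bell value equals $\tfrac{2}{3}-\tfrac{1}{2}=\tfrac{1}{6}$, whence $B_{P_q}\ge\tfrac{1}{6}$ (any absolute constant would do, it only fixes $c$). Because the quantum protocol succeeds with probability $\ge\tfrac23$ on \emph{every} input, this lower bound does not depend on the distribution $\mu$ appearing in the Bell inequality.

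For the denominator I would first fix $\mu$ to be a distribution that is hardest for classical protocols communicating $C_{P_q}$ bits; such a $\mu$ exists by finiteness (Yao's minimax principle). With this choice, $B_{P_c}$ equals the best worst-case advantage over $\tfrac12$ achievable by a classical protocol allowed $C_{P_q}$ bits of communication. Indeed, a local-realistic strategy for \eqref{eq:mpbell} is shared randomness together with local assignments of the port indices; turning it into a communication protocol costs exactly the $\sum_i\log_2 N_i=C_{P_q}$ bits needed to transmit the realized root-to-leaf path $q$ of the tree in Fig.~\ref{figure:typicaloutputs}, and since the path-indexed marginals of \eqref{corr_from_proto} telescope to a probability measure, $\sum_{q\in\mathcal{P}}p(o_q=f(x,y)|x,y)$ is exactly the success probability of that protocol on $(x,y)$. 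Averaging against the worst-case $\mu$ and invoking minimax converts this $\mu$-average advantage into the worst-case advantage $B_{P_c}$ of the best randomized $C_{P_q}$-bit protocol.

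The final step is amplification. A randomized $C_{P_q}$-bit protocol with advantage $B_{P_c}$ on every input, repeated $k=O(B_{P_c}^{-2})$ times with fresh randomness and combined by a majority vote, computes $f$ with success $\ge\tfrac23$ on every input (Chernoff bound, applied input by input) and uses $O(C_{P_q}B_{P_c}^{-2})$ bits. Hence $C(f,\tfrac23)\le O(C_{P_q}B_{P_c}^{-2})$, i.e. $B_{P_c}\le O\!\big(\sqrt{C_{P_q}/C(f,\tfrac23)}\big)$, and combining with $B_{P_q}\ge\tfrac16$ yields $B_{P_q}/B_{P_c}\ge c\sqrt{C(f,\tfrac23)/C_{P_q}}$ with $c$ absolute.

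The main obstacle I anticipate is the middle paragraph: showing that the classical value of \eqref{eq:mpbell} is \emph{exactly} the worst-case classical communication advantage at the budget $C_{P_q}$, rather than merely being bounded by a larger quantity. This needs (i) the combinatorics of the outcome tree \eqref{corr_from_proto}, namely that summing the path-indexed marginals rebuilds the success probability of the simulated protocol, and (ii) the observation that amplification only works input by input, which forces one to pass to the hardest distribution $\mu$ via minimax before amplifying. The numerator bound and the Chernoff step are routine.
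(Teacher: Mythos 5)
Your proposal is correct and follows the same two-step skeleton as the paper's own proof: lower-bound $B_{P_q}$ trivially by $1/6$ via the success-probability-to-Bell-value remark, and upper-bound $B_{P_c}$ through the amplification argument of Appendix~\ref{appendixA}, giving $B_{P_c}\le O\bigl(\sqrt{C_{P_q}/C(f,2/3)}\bigr)$. The paper compresses this to three lines: it defines $\delta$ as the classical advantage at budget $C_{P_q}$, applies the appendix bound to get $\delta\le\sqrt{3C_{P_q}/C(f,2/3)}$, and divides. What you do differently, and what the paper silently glosses over, is the middle step tying the $\mu$-averaged classical Bell value to a quantity that the worst-case amplification argument of Appendix~\ref{appendixA} can actually be applied to; your Yao-minimax detour (choose the hardest $\mu$, pass from distributional to worst-case advantage, then amplify) fills a real gap in the paper's exposition. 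One small correction to your own worry in the last paragraph: you do not need the classical Bell value to \emph{equal} the best worst-case communication advantage. A local-realistic assignment for \eqref{eq:mpbell} produces a non-adaptive deterministic $C_{P_q}$-bit protocol (each party precommits to all teleportation indices), whereas genuine $C_{P_q}$-bit protocols may be adaptive, so in general one only gets $B_{P_c}\le\epsilon_R$ with possible strict inequality. But that one-sided bound is exactly what the argument consumes, so the exactness concern is moot; once $B_{P_c}\le\epsilon_R$ is in hand, the Chernoff amplification and the division by $B_{P_q}\ge 1/6$ go through as you wrote.
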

\begin{proof}
Denote $C_{P_q}$ to be the amount of quantum communication required to achieve the probability of success $p_{q}=\frac{2}{3}$.   

May a classical protocol $P_{c}$ after the exchange of $C_{P_q}$ messages achieve the probability of success $p_c = \frac{1}{2} + \delta \le \frac{2}{3}$ for solving $f$; we get $C_{P_q} =C(f,p_c)$. 
To express the ratio~\eqref{eqn:ratio} we need to find $\delta$ in terms of communication complexity. We achieve this by using the amplification argument (see Appendix~\ref{appendixA} for the proof), which boosts the success probability to $\frac{2}{3}$ at the expense of sending at most $C(f, p_{c})$ bits of communication: 
\be
C(f, p_{c})\geq \frac13 \left(p_c-\frac12\right)^2 C\left(f, \frac23\right).
\ee
Thus, we can get the expression for $\delta$ in terms of $C(f, p_{c})$ and $C\left(f,\frac{2}{3}\right)$:
\be\label{deltabound}
\delta \leq  \sqrt{\frac{3 C(f,p_c)}{C\left(f,\frac23\right)}} = \sqrt{\frac{3 C_{P_q}}{C\left(f,\frac23\right)}}.
\ee

Using the definitions of ${B_{P_q}}$ and ${B_{P_c}}$, we get

\be
\frac{B_P}{B_{c}}=\frac{1/6}{\delta}\ge\frac{1}{6\sqrt3} \sqrt{\frac{C(f,\frac23)}{C_{P_q}}}.
\ee\proofend
\end{proof}

Given the function $f$, take $Q(f, \frac{2}{3})$ (similarly $C\left(f,\frac{2}{3}\right)$) to denote the number of qubits (classical bits) required to be exchanged in order to compute $f$ with the probability of success $p\ge2/3$. To simplify the notation, when it is apparent from the context, we will further denote $Q(f,\frac{2}{3})$ and $C\left(f,\frac{2}{3}\right)$ as $Q$ and $C$ respectively.

Using our construction, it is apparent that using the quantum correlations supplemented by $Q^4$ bits of communication,
we obtain $p_{q}=\frac12 +  \frac16 (1-2^{-Q})^{2Q}$. The optimal classical protocol which uses $C(f, p_{c}) = Q^4$ bits of communication achieves some $p_{c} = \frac{1}{2} + \delta$ with $\delta\le \sqrt{\frac{3 C(f,p_c)}{C}}$ bounded 
according to Eqn.~\eqref{deltabound}. According to Lemma~\ref{lemma:bellvaluelemma}, 
\be
\frac{B_q}{B_{c}}=  \frac{\frac16 (1-2^{-Q})^{2Q}}{\delta}\ge O\left( \frac{\sqrt{C}}{Q^2}\right).
\ee

Thus, whenever the quantum communication complexity scales slower than the fourth root of the classical communication complexity, we obtain an unbounded violation of the associated Bell inequality. Let us illustrate it with a few examples.

\subsection{Examples}
{\bf Vector in subspace problem with 1-way communication}.
In this protocol, there is only one round of communication from Alice to Bob. Also, the local memory is not used. The deterministic quantum protocol requires $\log n$ qubits of communication (where $n$ is the length of the vector in the problem), while the classical communication complexity is $C\left(f,2/3\right) = \Omega(\sqrt[3]{n})$~\cite{RegevKartlag}. 

Knowing the quantum protocol $P_q$ explicitly, we obtain a stronger Bell inequality because we do not need to invoke any approximations. Using $5 \log n$ bits of communication and correlations \eqref{corr_from_proto}, we can achieve the quantum success probability of $p_{q}=1/2+1/2(1-2^{-5\log n})^{10\log n}$, while the classical protocol using the same amount of communication achieves $p_{c}=1/2+\delta$, where $\delta^2\leq \frac{5\log n}{c\sqrt[3]{n}}$, for some constant $c$. Thus, the ratio of quantum to classical values is:
\begin{equation}
\frac{B_{qu}}{B_{c}}=\frac{1/2(1-1/n)}{ \sqrt{ 5\log n / c\sqrt[3]{n}}} = \Omega\left(\frac{\sqrt[6]{n}}{\sqrt{\log n}}\right).
\end{equation}

{\bf Vector in subspace problem with 2-way communication (Raz original problem~\cite{Raz99})}. 
In this protocol, Alice sends Bob the quantum state of the size $\log n$ (where $n$ is the length of the vector in the problem) and then receives the state of the same size. As in the previous example, parties do not use local memory. There exists a deterministic quantum protocol for this problem.
The classical communication complexity is $C\left(f,2/3\right) = \Omega(\sqrt[4]{n}/\log n)$. But using only $10 \log n$ qubits of communication and correlations \eqref{corr_from_proto}, we get $p_{q}=1/2+(1-2^{-\log n})^2$. The classical protocol using the same amount of communication achieves $p_{c}=1/2+\delta$ where $\delta^2\leq c\frac{10\log^2 n}{\sqrt[4]{n}}$, for some constant $c$. Thus, the ratio of quantum to classical values is:
\begin{equation}
\frac{B_{qu}}{B_{c}}=\frac{1/2(1-1/n)^2}{ \sqrt{ c 10\log^2 n / \sqrt[4]{n}}} = \Omega\left(\frac{\sqrt[8]{n}}{\log n}\right).
\end{equation}

\subsection{One-way communication complexity problems}\label{sec:oneway}
We now detail the scenario when Alice is allowed to send a single message to Bob in order to introduce a very different approach to obtain the violation of a Bell inequality. In this case, state preparation protocol on Alice's side followed by the measurement of a quantum state by Bob will suffice. Also, there is no need for the local quantum memory on either side because one does not have to preserve the state of the communication protocol. Therefore, the role of the port-based teleportation is played by the remote state preparation. 

One marked difference of this approach is that it consumes a significantly smaller amount of entanglement. Also, in this setting, we are obtain the non-linear Bell inequality which explicitly features the probability of Bob guessing the communication from Alice -- something which is not possible using the method which relies on the port-based teleportation.

We first outline the remote state preparation protocol, and then construct the relevant Bell inequalities below.

{\bf Remote state preparation.}
In the remote state preparation, Alice and Bob share a maximally entangled qudit state $\ket{\Phi^+}_{AB} = \frac{1}{\sqrt{d}}\sum_{i=0}^{d-1}|i\rangle_A|i\rangle_B$. 
Alice wants to prepare a known quantum state $\ket{\phi}$ on Bob's side by acting only on her share of the qudit, requiring no post-processing on his side. To achieve this, she performs a measurement with elements $\{\proj{\phi^*},\idop-\proj{\phi^*}\}$, where $|\phi^*\rangle$ is a conjugation of $|\phi\rangle$ in the computational basis, on her part of $\ket{\Phi^+}_{AB}$, followed by the communication of the classical outcome to Bob if she measured $\proj{\phi^*}$ (we denote this outcome as $1$). This protocol has a very low probability of success $\frac{1}{d}$. We discuss the techniques to amplify it in the Appendix~\ref{appendixA}.

{\bf Correlations.}
Applying the remote state preparation protocol to our communication complexity problem, we obtain the following correlations:
\begin{equation}\label{eq:singlecorr}
p(a,b|x,y) = \tr{(M^a_x\otimes M^b_y)\rho_{AB}},
\end{equation}
where $\{M^a_x\}$ are the POVM elements from the remote state preparation and 
$\{M^b_y\}$ describes Bob's measurements on the shared state $\rho_{AB}$. 
In the current setup, the number of the binary observables equals of Alice and Bob is equal to the number of inputs $x$ and $y$. The correlations~\eqref{eq:singlecorr} are obtained by acting on a single instance of the entangled state whereas the multi-round approach uses in the order of $2^Q$ states.

We define the following success probabilities:
\bei
\item  $p_A$ - probability that Alice succeeded, i.e. her outcome is $1$ (averaged over all observables by the measure $\mu$)
\begin{equation}
p_A= \sum_{x,y} \mu(x,y) p(a=1|x,y)\label{eq:padef}.
\end{equation}
This probability turns out to be equal to Bob successfully `guessing' the communication from Alice in the absence of communication from the latter. 
\item $p_B$ - conditional probability, that Bob's outcome is equal to value of the function,
given that Alice succeeded
\begin{equation}
p_B= \sum_{x,y} \mu(x,y) p(b=f(x,y)|x,y,a=1)\label{eq:pbdef}.
\end{equation}
\eei

Using roughly $m\approx 1/p_A$ instances of the state $\rho_{AB}$, Alice obtains one successful outcome $a=1$ on average. Then, Alice communicates to Bob this successful instance.

Merging $m$ instances together, we obtain following set of correlations:
\begin{equation}
 p\left(\{i\},\{ o_1,\ldots,o_{N}\}|x,y\right),
\end{equation}
where $i\in I$, $I=\{1,\ldots, m\}$ denotes the case when the remote state preparation succeeds and $\{o_i\}$ are the respective outputs. 
Thus, our Bell inequality may be written in the form~\eqref{eq:mpbell}:
\begin{eqnarray}
\sum_{x,y}\mu(x,y)\sum_{i\in I}p( i, o_{i} = f(x,y)|x,y) \leq 1/2 + \delta.
\end{eqnarray}

Now we derive a Bell inequality for the case where the parties have the option to abort at any stage of the protocol. Our inequality turns out to be nonlinear and will depend only on two parameters, $p_A$ and $p_B$.

To derive the inequality, we show how Alice and Bob may guess the correct value of the function. 
In this setup, as in the previous case, Alice uses $m\approx 1/p_A$ instances of the state $\rho_{AB}$. Then Alice communicates to Bob the first instance where the outcome appeared, using $\log m \approx - \log p_A$ bits. Lastly, Bob looks at the outcome
for the successful instance, and with probability $p_B$ obtains the value of the function $f$.

Now, if Alice and Bob share a state that admits a local-realistic description,
then the used communication cannot be smaller than the  value $C\left(p_B,n\right)$, since it is the optimal
value attainable by classical means. Thus for any local-realistic state, we must necessarily have:
\be
\log \frac{1}{p_A} \gtrsim C(p_B,n).
\label{eq:Bell1}
\ee
See Appendix~\ref{appendixB} for further details.

\subsection{Discussion}

Examples show that our protocol produces large violations which are a bit weaker than the best known ones such as 
$\frac{n}{\log^2 n}$ \cite{KVGame} or   $\frac{\sqrt{n}}{\log n}$ \cite{nonloc_hidden_matching}. This seems to be the price 
for its universality. However, it is an interesting open question, whether one can find a communication complexity protocol,
such that the obtained Bell inequality would be in some respect better than existing large Bell violations.
Another challenge is to decrease the amount of entanglement used to violate our Bell inequalities, which in our construction is exponential in the quantum communication complexity of the given problem. Similarly, the output size grows exponentially which gives rise to the question of whether there exists a more efficient method of exhibiting the non-locality of quantum communication complexity schemes. 
Finally, our method does not cover the protocols with initial entanglement. This is quite paradoxical, because protocols that use initial entanglement should be non-local even more explicitly. 
It is therefore desirable to search for a method of demonstrating the non-locality of such protocols. 

{\bf Acknowledgement}
The work is supported by EC IP QESSENCE, ERC AdG QOLAPS, the EU project SIQS, MNiSW grant IdP2011 000361, Polish National Science Centre Grant 201DEC-2011/02/A/ST2/00305 and  EC grant RAQUEL. S.S. acknowledges the support of Sidney Sussex College and FET Proactive project QALGO (600700).

\appendix
\section{Communication complexity for arbitrary success probability from
Communication complexity for fixed success probability}
\label{appendixA}

Here we shall use the amplification (pumping) argument, which is a  well known technique for increasing the success probability of a randomized protocol~\cite{book_comcomplex} by repetition, and prove the following bound for the communication required by a randomized algorithm:
\begin{equation}
C(f, \psucc)\geq \frac{1}{3}\left(\psucc-\frac12\right)^2 C\left(f, \frac23\right),
\end{equation}
where $C(f, \psucc)$ stands for communication complexity of an arbitrary (quantum or classical) randomized protocol. The bound is 
valid for $\frac12<\psucc<\frac23$.
We use the pumping argument to show that a smaller $C(f, \psucc)$ would enable one to construct a protocol which uses less communication than $C\left(f, \frac23\right)$ and achieves $\psucc=\frac23$, and hence leads to a contradiction.

Let the protocol $\protocol$ use $C\left(\frac12+\epsilon\right)$ bits of communication to achieve $\psucc=\frac12+\epsilon$. Let us consider protocol $\protocol'$ in which Alice and Bob repeat protocol $\protocol$ $l$ times and then Bob returns as an answer the most common output of $\protocol$.
Since we are restricted to Boolean functions, the success probability $\psucc'$ of $\protocol'$ is equal to the probability that protocol $\protocol$ gives the correct answer no less than $\left\lceil l/2 \right\rceil+1$ times.
By the Chernoff bound we get:
\begin{equation}
\psucc'\geq 1 - \exp\left( - \frac{1}{2} l \epsilon^2 \right).
\end{equation}
Since we require that $\psucc'\geq\frac23$, we get that
\begin{equation}
l \geq 3 / \epsilon^2\label{eq:lepsrel}.
\end{equation}
From the communication complexity bound, it is known that in order to achieve $\psucc=\frac23$, the protocol $\protocol'$ requires at least $C\left(f, \frac23\right)$ bits of communication. On the other hand, protocol $\protocol'$ repeats protocol $\protocol$ $l$times and uses $l C(f, \frac12+\epsilon)$ bits of communication. Putting this together, we have:
\begin{equation}
l C\left(f, \frac12+\epsilon\right) \geq C\left(f, \frac23\right).
\end{equation}
Using relation~\eqref{eq:lepsrel} we get finally:
\begin{equation}
C\left(f, \frac12+\epsilon\right) \geq \frac{\epsilon^2}{3} C\left(f, \frac23\right)\label{eq:pumpingarg}.
\end{equation}
For $\frac12+\epsilon=\frac23$ our estimation leads to a communication complexity bound of $1/108 \;C\left(f, \frac23\right)$ which is much below the true value. This discrepancy comes from the non-optimality of the pumping protocol.

\section{Rigorous derivation of the Bell inequality and its violation}
\label{appendixB}

We now derive the central result of our paper - the following Bell inequality:
\begin{equation}\label{eq:maineq}
\left\lceil\log 1/\pacc + \log\log 1/\delta \right\rceil+1 \geq C_\mu(f,n,(1-\delta)p_A+\delta/2),
\end{equation}
where the classical communication complexity $C(f,p)$ is additionally parametrized by $\mu$ and the size of the problem $n$.
First, we construct a one-way protocol with classical communication which makes use of shared shared entanglement given the set of correlations. We restrict ourselves to the family of correlations $p(a,b | x,y)$ with $x,y\in\{0,1\}^n$ , $a,b\in\{0,1\}$. As usual, $a=1$ is interpreted as the success on Alice's side. When the latter occurs, we expect $b$ to hold the value of the function: $b=f(x,y)$. This restriction does not limit the generality since we may always take negation of $a,b,x,y$ which is a local operation.

We show that for any correlation $p(a,b|x,y)$, characterised by $n$, $p_A$ and $p_B$ (defined in Section~\ref{sec:oneway}), leads to the protocol $\protocol_B$ solving a problem of size $n$ using $\left\lceil\log 1/\pacc + \log\log \delta\right\rceil + 1$ bits of communication and achieving $\pdist=(1-\delta)p_B + \delta/2$ for the initial probability distribution $\mu(x,y)$.

Protocol $\protocol_B$ works as follows.
Let Alice and Bob share $\left\lceil k / p_A \right\rceil$ copies of the correlations. They use their inputs $x$, $y$ to select the appropriate measurements. Alice sends to Bob the index $i$ of the first correlation where she obtained $a=1$. Then, Bob estimates $b$ for the respective correlation $i$ and returns it as an output of protocol $\protocol_\nsbox$.
In the case when none of the boxes returned $a=1$, Alice outputs $\text{ABORT}$ and Bob  returns a random bit.

The protocol requires $\left\lceil\log k /p_A\right\rceil$ bits of communication to encode $i$ of the box and $1$ extra bit to encode the message $\text{ABORT}$.
The probability that Alice gets $a=1$ for at least one instance is $1-(1-p_A)^{k / p_A}\geq 1 - 2^{-k}$. For this case Bob returns $b=f(x,y)$ with the probability $p_B$. If Bob receives $\text{ABORT}$, he returns the proper value with probability $1/2$.
Putting $\delta=2^{-k}$ we get an overall success probability of $\pdist=(1-\delta)p_B+\delta/2$ with communication of $\left\lceil\log 1/p_A + \log\log 1/\delta \right\rceil+1$ bits.

For all the correlations with the local hidden variable model we get:
\begin{equation}
\left\lceil\log 1/\pacc + \log\log 1/\delta \right\rceil+1 \geq C_\mu(f,n,(1-\delta)p_B+\delta/2) \label{bell_distexact}.
\end{equation}

In the case when the communication complexity is given only for the fixed probability of success $\psucc=\frac23$, by the pumping argument and the fact that $C_\mu(\frac23,n)\leq C_\mu(\psucc,n)$ for $\psucc\geq \frac23$ we obtain
\begin{eqnarray}
& \left\lceil\log 1/\pacc + \log\log 1/\delta \right\rceil+1 \geq&\\
& \begin{cases}
& \frac{1}{3} \left((1-\delta)p_B+\delta/2 -\frac12\right)^2 C_\mu(\frac23,n),  \text{ if } (1-\delta)p_B+\delta/2 \leq\frac23, \\
& C_\mu(\frac23,n),  \text{ if } (1-\delta)p_B+\delta/2 > \frac23.
 \end{cases}
&\nonumber
\end{eqnarray}

Using the fact, that
correlations obtained from a quantum protocol with communication complexity $Q$ and success probability $\psucc$ are characterized by $\pacc = 2^{-Q}$ and $p_B = \psucc$ and inserting them into~\eqref{bell_distexact},
we make the following observation:
\begin{obs}
Let $C_\mu(f,n,\psucc^C)$ be defined as in Eqn.~\eqref{eq:maineq}. If correlations
obtained by construction~\eqref{corr_from_proto} from a quantum protocol with success probability $\psucc$ and
communication complexity $Q$ do not violate the Bell inequality~\eqref{bell_distexact}, then:
\begin{equation}
Q(f,n,\psucc) \geq\label{eq:bellineqrigcc}
\max_\delta\left( C_\mu(f,n,(1-\delta)\psucc+\delta/2) - \log\log 1/\delta\right) - 2.
\end{equation}
\end{obs}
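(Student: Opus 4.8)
The plan is to obtain \eqref{eq:bellineqrigcc} as an immediate contrapositive of the Bell inequality \eqref{bell_distexact} derived above, once the relevant correlation parameters $\pacc$ and $p_B$ are identified for the construction \eqref{corr_from_proto}. Recall that \eqref{bell_distexact} asserts that \emph{any} correlation admitting a local hidden variable model satisfies $\lceil \log 1/\pacc + \log\log 1/\delta\rceil + 1 \ge C_\mu(f,n,(1-\delta)p_B+\delta/2)$ for every $\delta\in(0,1)$. Consequently, if the correlations produced from a given quantum protocol do \emph{not} violate \eqref{bell_distexact}, then they satisfy that inequality for all $\delta$ --- exactly the conclusion a local model would force.

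First I would plug in the parameters. A quantum protocol of communication cost $Q$ and success probability $\psucc$, run through the remote-state-preparation construction of Section~\ref{sec:oneway}, yields correlations with $\pacc = 2^{-Q}$ (the success probability $1/d$ of remote state preparation into the $d=2^Q$ dimensional message space) and $p_B = \psucc$ (conditioned on Alice's success, Bob holds precisely the state the original protocol would have sent him, so his measurement reproduces $f(x,y)$ with probability $\psucc$). Substituting into the inequality --- which by assumption holds --- gives $\lceil Q + \log\log 1/\delta\rceil + 1 \ge C_\mu(f,n,(1-\delta)\psucc+\delta/2)$ for every $\delta$.

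Next I would discard the ceiling using $\lceil t\rceil + 1 \le t+2$ with $t = Q + \log\log 1/\delta$, obtaining $Q + \log\log 1/\delta + 2 \ge C_\mu(f,n,(1-\delta)\psucc+\delta/2)$, i.e. $Q \ge C_\mu(f,n,(1-\delta)\psucc+\delta/2) - \log\log 1/\delta - 2$. Since $\delta$ is arbitrary, the sharpest bound is obtained by maximizing the right-hand side over $\delta$, which is exactly \eqref{eq:bellineqrigcc}. The optimization is a genuine trade-off: $\delta\to0$ blows up $\log\log 1/\delta$, while $\delta\to1$ drives the argument of $C_\mu$ toward $1/2$, where communication complexity vanishes.

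There is no real technical obstacle here; the substance was already spent in deriving \eqref{bell_distexact}. The one point meriting care is purely logical --- ``not violating \eqref{bell_distexact}'' must be read as ``the observed statistics satisfy \eqref{bell_distexact} for all $\delta$'' --- together with checking that the construction \eqref{corr_from_proto} does output a legitimate no-signalling correlation with the stated $\pacc$ and $p_B$, so that the substitution into \eqref{bell_distexact} is valid.
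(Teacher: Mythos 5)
Your proposal is correct and matches the paper's own (terse) derivation: the text preceding the Observation simply states that the quantum protocol yields $\pacc=2^{-Q}$ and $p_B=\psucc$, substitutes these into~\eqref{bell_distexact}, drops the ceiling via $\lceil t\rceil+1\le t+2$, and maximizes over $\delta$. Your added remarks on the logical reading of ``does not violate'' and on the $\delta$ trade-off are fine expository glosses but do not change the argument.
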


To witness the violation of a Bell inequality constructed for a particular function $f$, it suffices to know how $C_\mu(f,n,\psucc^{C})$ dominates over $Q_\mu(f,n,\psucc^{Q})$ for some fixed $\psucc^{Q}\geq\psucc^{C}$~\footnote{Function $g(n)$ dominates $h(n)$ if for any constant $k$ there exists $n_0$ such that for any $n>n_0$ we have $k g(n) \geq h(n)$.}.

\end{document}